\newtheorem{theorem}{Theorem}[section]
\newtheorem{defn}{Definition}[section]
\newtheorem{ bremark}{Remark}[section]
\title{ More on  Boundary Behavior of Univalent Harmonic Mappings} 
\author{
Gebreslassie Atsbha Weldegebrial\textsuperscript{1$\ast$}
Hunduma Legesse Geleta\textsuperscript{2} \\
\\
\textsuperscript{1$\ast$}mathematics, Addis Ababa University, Arada, Addis Ababa , 1176, Addis Ababa, Ethiopia \\
\textsuperscript{2}mathematics, Addis baba University, Arada, Addis Ababa , 1176, Addis Ababa, Ethiopia \\
$\ast$Corresponding author: \texttt{gebreslassie.atsbha@aau.edu.et}   https://orcid.org/0009-0004-6409-1236\\
Contributing author: \texttt{hunduma.legesse@aau.edu.et} \\
\dag~These authors contributed and approved this work.
}
\date{\today}
\begin{document}

\maketitle

\pagenumbering{arabic}



\textbf{ Abstract}\\

\textit{Many authors have examined various boundary behaviors of injective harmonic mappings in the open unit disk. Building on Laugesen's work, Bshouty and others explored the boundary behavior of harmonic mappings under different conditions. In this paper, we extend their work and find out the angular limits of the arguments and logarithms of analytic functions under various conditions. We also examined the dilatation possesses only a  finite set  of zeros within any stolz angle if the first derivative of harmonic function $f$ at the boundary is positive infinity.}\\

\textbf{Keywords/phrases}: Univalent harmonic function; Angular limit; Dilatation; Boundary Behavior.
\section{Introduction}

Throughout this article, GRM, $\mathbb{C}$, $\Omega$, $\partial\Omega,$   $\Delta$, $\mathbb{T}$ , $\omega$ and  $\Phi$ denote General Riemann  Mapping theorem,the complex plane, region in the complex plane, boundary of the region, the open unit disk, boundary of the unit disk, the dilatation and boundary function  respectively.\\

 Given that  $\Phi: \partial\Omega\rightarrow\mathbb{C}$ to be  continuous. Then there exist a continuous mapping $f: \overline{\Omega}\rightarrow\mathbb{C}$ in which $\Phi(z) = f(z)$ for all $z\in \partial\Omega$ and $f$ is harmonic in $\Omega$~\textbf{\cite{conway2012functions}}. Then the Poisson integral $f(z) = \frac{1}{2\pi}\int_{-\pi}^{\pi}P(r,\varphi-\theta)\Phi(e^{i\varphi})d\varphi$, $z=re^{i\theta}\in\Delta$, where $F(r,\theta)$ is the Poisson kernel of $\Delta$,  is the solution for the Dirichlet problem. Furthermore, $P(r,\varphi-\theta$ it is harmonic mapping  in $\Delta$ ~\cite{bshouty2010problems}.\\

The study of harmonic mapping is basically the extension  of analytic functions. These mappings have many applications in different aspects of physics and other domains where La place's equation is prominent. To this effect, research on harmonic functions is motivated both by their mathematical contribution and their practical relationships to physical settings.\textbf{\cite{laugesen1997planar}}.\\


A recent area of attention is the boundary values of univalent complex-valued harmonic mappings.
 Bshouty et al \textbf{\cite{bshouty2012boundar}}, studied the angular limits of arg $h'(z)$ and arg $g'(z)$ provided $\frac{d\Phi}{d\theta}(e^{i\theta}) = 0$ where,  $f$ is the Riemann mapping from $\Delta$ onto a bounded convex region. In this study, we further extend on  these works by finding different conditions on the boundary behavior of  $f$ to determine the angular limit of analytic functions provided that  $\frac{d\Phi}{d\theta}(e^{i\theta})= 0.$	We calculate  the angular limits of log $h'(z)$ , log $g'(z)$, argument of $ h'(z)$ and argument of  $g'(z)$  given that  $\frac{d\Phi}{d\theta}(e^{i\theta}) \neq 0.$ \\  Another purpose of this paper is to examine a problem posed by A.Lyzzaik  et al \textbf{\cite[problem 3.19]{bshouty2010problems}} . This problem is solved partially that if  $e^{i\theta_{0}}$ exists, then $\omega$ possesses only a  finite set  of zeros within any stolz angle at $e^{i\theta_{0}}$ for  $|\frac{d\Phi}{d\theta}(e^{i\theta_{0}})|\leq$c, where  c is a specific constant . Building up on this, we addresses the dilatation $\omega$ exhibits finitely many roots in each stolz angle $S_{\theta_{0}}$with vertex $e^{i\theta_{0}}$ if $\frac{d\Phi}{d\theta}(e^{i\theta_{0}})$ is $+\infty.$\\
The structure of this paper is as follows: In part 2, we review some foundational results that will be important for proving the key findings. In Section 3, we state and prove the primary results of the paper. Theorem 3.1 formulate  constraints on the boundary values of a complex-valued harmonic function 
$f$ to determine the angular limits of the logarithm of analytic functions. Theorems 3.2 and 3.3 provide expressions for the logarithm and argument of analytic functions, assuming the derivative of 
$f$ at the boundary is nonzero finite.
Theorem 3.4 finds  the dilatation $\omega$ contains  only a  finite set  of zeros within  any stolz angle at $e^{i\theta_{0}}$ if $\frac{d\Phi}{d\theta}(e^{i\theta_{0}})$ is $+\infty.$
 Furthermore, we have proved that there is no interior zeros of $f'$ in a stolz angle at $e^{i\theta_{0}}$ if GRM maps $\Delta$ onto bounded convex domain.
In this part, we present key ideas and findings that will be essential for proving the main results. We start by presenting some classic outcomes, along with significant definitions and theorems.\\

\begin{defn}\cite{romney2013class}.
A complex valued harmonic functions $f$ = $h+\overline{g}$ is said to be orientation preserving at $z_{0}$ if $J_{f}(z_{0})$ $> 0$ and is orientation reversing at $z_{0}$ if $J_{f}(z_{0})$ $< 0$, where  $J_{f}(z_{0})$ is the jacobian of $f$ which is formulated as\\
   $$J_{f}(z_{0})= |f_{z}|^2-|f_{\overline{z}}|^2 = |h'|^2-|g'|^2$$

 Its dilatation $w$ is given by $\omega(z) =  \frac{g'(z)}{h'(z)}.$ 
\end{defn}



\begin{defn}\cite{Privalov1950}.
 A Stolz angle is a sector in the upper half-plane with its vertex on the x-axis, touching the x-axis only at that single point. For $e^{i\theta_{0}}\in \mathbb{T},$\\
 $$ S_{\alpha}(e^{i\theta_{0}}) = \left\{z\in \Delta:|z - e^{i\theta_{0}}|<\alpha(1 - |z|)\right\}, 0<\alpha<\infty$$
\end{defn}

  \begin{theorem}
  (Hengartner and Schober \cite{hengartner1986boundary}) Suppose $\Omega$ be a bounded  connected open set Having a boundary that is locally connected.  Assume  that $\omega(\Delta)\subset\Delta$ and $\omega_{0}$ is an invariant point of $\Omega.$ Then for $\bar{f_{\bar{z}}} = \omega(z)f_{{z}}$ there exists a solution satisfying:\\
(a) $f(0)$ = $\omega_{0}$, $f_{z}(0)>0$, and $f(\Delta)\subset\Omega$\\
(b) There exist  a countable set $A\subset\mathbb{T}$ for which the unrestricted limits $\Phi(e^{it})$ = $\lim_{z\to e^{it}}f(z)$ exist on $\mathbb{T}$$\backslash$A and belongs to $\partial\Omega.$\\
(c) $\Phi(e^{it-})$  = ess$lim_{s\uparrow t}\Phi(e^{is})$ and $\Phi(e^{it+})$  = ess$lim_{s\downarrow t}\Phi(e^{is})$\\
exist on $\mathbb{T}$, is on the boundary of  $\Omega$ and are the same on $\mathbb{T}$$\backslash$A.\\
(d) The accumulation points of $f$ at $e^{it}\in$ A is the direct line segment connecting $\Phi(e^{it-})$ to  $\Phi(e^{it+}).$
  \end{theorem}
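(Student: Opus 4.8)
The plan is to prove the theorem in two stages. \emph{Existence with the normalisation (a)} is the easy part: writing $f=h+\overline g$, the relation $\overline{f_{\bar z}}=\omega f_{z}$ is exactly the shear condition $g'=\omega h'$, so $f$ is determined by the single analytic datum $h'$ together with $f(0)=\omega_{0}$; and since $\omega(\Delta)\subset\Delta$ the Jacobian $|h'|^{2}(1-|\omega|^{2})$ is automatically nonnegative, so every such $f$ is orientation preserving. As $\Omega$ is bounded, the family of these $f$ with $f(0)=\omega_{0}$ is uniformly bounded (hence normal by Montel) and nonempty — e.g.\ $h'(z)\equiv c$ with $c>0$ small keeps $f$ inside a small disc about $\omega_{0}\subset\Omega$ — while $|h'(0)|$ stays bounded (Borel--Carath\'eodory applied to $h+g$, whose real part equals $\operatorname{Re}f$ and is bounded). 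Maximising $\operatorname{Re}h'(0)$ over this family therefore yields an extremal solution; a rotation normalisation makes $h'(0)>0$, giving (a). It is this extremal $f$ whose boundary behaviour we analyse.

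\emph{Boundary behaviour} (parts (b)--(d)) is the substantive part. Since $f(\Delta)\subset\Omega$ with $\Omega$ bounded, $f$ is a bounded harmonic function, hence the Poisson integral of some $\Phi\in L^{\infty}(\mathbb T)$, and by Fatou's theorem it has nontangential limits $\Phi(e^{it})$ at almost every $e^{it}\in\mathbb T$. To place these limits on $\partial\Omega$ rather than inside $\Omega$ I would combine Lindel\"of's theorem on nontangential cluster sets with the openness of $f$ (Lewy's theorem: an orientation-preserving harmonic map is open off the discrete zero set of its Jacobian), which rules out an interior radial limit on a set of positive measure. Observing next that $\operatorname{Re}(h+g)=\operatorname{Re}f$ and $\operatorname{Im}(h-g)=\operatorname{Im}f$ are bounded harmonic, the analytic functions $h+g$ and $h-g$ lie in $\bigcap_{p<\infty}H^{p}$, so $h$ and $g$ individually have nontangential limits a.e.; this is what makes the cluster-set analysis near exceptional points tractable.

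To upgrade ``almost everywhere'' to ``off a countable set'' I would exploit the orientation-preserving property to show the boundary correspondence is \emph{monotone}. Using local connectivity of $\partial\Omega$, fix (Carath\'eodory) a conformal $\psi\colon\Delta\to\Omega$ extending continuously to $\overline\Delta$, and transport cluster sets of $f$ through $\psi^{-1}$ to prime ends of $\Omega$; an argument-principle/winding-number computation for the sense-preserving map $f$ shows $t\mapsto(\text{prime end attained by }f\text{ at }e^{it})$ is a monotone set-valued map of $\mathbb T$ into the prime-end boundary. A monotone map of the circle is discontinuous at most on a countable set $A$, and off $A$ the one-sided essential limits $\Phi(e^{it\pm})$ coincide with the unrestricted limit — this is (b) and (c). Finally, at $e^{it_{0}}\in A$ the values $\Phi(e^{it_{0}-})$ and $\Phi(e^{it_{0}+})$ are the endpoints of the corresponding prime-end segment; near $e^{it_{0}}$ the Poisson integral $f$ is, to leading order, an affine function of the single bounded harmonic function $\mu$ with boundary data the indicator of the arc on which $\Phi$ jumps (whose level curves near $e^{it_{0}}$ are circular arcs ending at $e^{it_{0}}$), so $f\to(1-\mu)\Phi(e^{it_{0}-})+\mu\,\Phi(e^{it_{0}+})$ along those level curves and the cluster set is exactly the segment $[\Phi(e^{it_{0}-}),\Phi(e^{it_{0}+})]$, which is (d).

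I expect the crux to be the monotonicity of the prime-end boundary correspondence together with the companion ``affine in a single harmonic coordinate'' estimate behind (d): with only local connectivity of $\partial\Omega$ available (no convexity, no smoothness), neither follows from a single inequality, and both rely on the Carath\'eodory/prime-end machinery combined with a careful use of the orientation-preserving hypothesis; the remaining steps are a fairly standard deployment of Fatou's, Lindel\"of's and Hardy-space theorems.
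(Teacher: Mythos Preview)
The paper does not prove this statement at all: it is quoted as a preliminary result with a bare citation to Hengartner and Schober, so there is no ``paper's own proof'' to compare against. Your outline is broadly in the spirit of the original 1986 argument (extremal problem for $h'(0)$, then boundary analysis via monotonicity), but since the present paper supplies nothing, any comparison is with the cited source rather than with this manuscript.

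That said, there is a genuine gap in your sketch of (b). You argue that the nontangential limits $\Phi(e^{it})$ must lie on $\partial\Omega$ by combining Lindel\"of's theorem with openness of $f$ (via Lewy). Openness of $f$ on $\Delta$ does \emph{not} force boundary values onto $\partial\Omega$: the analytic map $f(z)=\omega_{0}+\varepsilon z$ with small $\varepsilon>0$ is open, has $f(\Delta)\subset\Omega$, and its boundary values sit on a small circle well inside $\Omega$. What actually pins $\Phi$ to $\partial\Omega$ is the \emph{extremality} of $f$: if $\Phi(e^{it})\in\Omega$ on a set of positive measure, one can push $f$ outward along the inward normal (or compose with a suitable analytic self-map of $\Omega$) to produce a competitor with strictly larger $h'(0)$, contradicting maximality. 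Your proposal never invokes extremality after the existence step, so as written the chain breaks here; the same defect resurfaces in the monotonicity argument, since without knowing the boundary values actually trace $\partial\Omega$ there is no prime-end correspondence to be monotone about. If you rewrite (b)--(c) to use the extremal property explicitly at this point, the rest of your plan (monotone boundary map $\Rightarrow$ countable exceptional set; Poisson-integral cluster-set computation for (d)) is reasonable.
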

  Note:  $f$ is known as  GRM from $\Delta$ onto $\Omega.$ 
  \begin{theorem}\cite{bshouty2012boundar}.
  Let $\Phi$ be  real-valued function on  $\mathbb{T}$ that is lebesgue integrable. Then\\
  (i) If $(\frac{d\Phi}{d\theta})(e^{i\theta_{0}})$ exist and is finite, then
 $$
  lim_{z\rightarrow e^{i\theta}}\frac{\partial f}{\partial \theta}(z) = \frac{d\Phi}{d\theta}(e^{i\theta_{0}})
  $$
  exhibits  uniform behavior in any stolz angle with vertex located at $e^{i\theta_{0}}.$\\
   (ii) If $(\frac{d\Phi}{d\theta})(e^{i\theta_{0}})$ = $+\infty,$ then 
 $$
  lim_{z\rightarrow r^{}-1}\frac{\partial f}{\partial \theta}(re^{i\theta}) = +\infty
  $$
   exhibit uniform behavior in any stolz sector with vertex located at $e^{i\theta_{0}}.$
  \end{theorem}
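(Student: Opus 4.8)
The plan is to reduce to $\theta_0=0$ by a rotation and then estimate the Poisson integral directly. For $r<1$ one may differentiate under the integral sign, and since $\partial_\theta P(r,\varphi-\theta)=-\partial_\varphi P(r,\varphi-\theta)$ while $\int_{-\pi}^{\pi}\partial_\varphi P(r,\varphi-\theta)\,d\varphi=0$ by $2\pi$-periodicity, I would record the identity
\[
\frac{\partial f}{\partial\theta}(re^{i\theta})=-\frac{1}{2\pi}\int_{-\pi}^{\pi}\partial_\varphi P(r,\varphi-\theta)\big[\Phi(e^{i\varphi})-\Phi(e^{i\theta_0})\big]\,d\varphi ,
\]
where subtracting the constant $\Phi(e^{i\theta_0})$ costs nothing. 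Inside a Stolz angle $S_\alpha(e^{i\theta_0})$ one has $|\theta-\theta_0|\le C_\alpha(1-r)$, together with the elementary bounds $P(r,t)\le \frac{C(1-r)}{(1-r)^2+t^2}$ and $|\partial_t P(r,t)|\le\frac{C(1-r)|t|}{((1-r)^2+t^2)^2}$ for $|t|\le\pi$; these yield $\int_{-\pi}^{\pi}|\partial_t P(r,t)|\,(|t|+(1-r))\,dt\le C$ uniformly in $r$. I would then split the displayed integral at $|\varphi-\theta_0|=\delta$; on the far arc $\Phi(e^{i\varphi})-\Phi(e^{i\theta_0})\in L^1$ while $|\partial_\varphi P(r,\varphi-\theta)|=O(1-r)$ uniformly (the kernel is evaluated away from its pole), so that piece tends to $0$ uniformly in $S_\alpha$.

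For part (i), on the near arc I would write $\Phi(e^{i\varphi})-\Phi(e^{i\theta_0})=L(\varphi-\theta_0)+\psi(\varphi)$ with $L=\frac{d\Phi}{d\theta}(e^{i\theta_0})$ finite, choosing $\delta$ so small that $|\psi(\varphi)|\le\varepsilon|\varphi-\theta_0|$ there. Integrating the linear term by parts in $\varphi$, the endpoint contributions are $O(1-r)$ and what remains is $\int_{|\varphi-\theta_0|<\delta}P(r,\varphi-\theta)\,d\varphi$, which tends to $2\pi$ because $P(r,\cdot)$ concentrates at the pole $\varphi=\theta$ and $\theta\to\theta_0$ stays inside the arc; so this term tends to $L$. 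For the remainder, using $|\varphi-\theta_0|\le|\varphi-\theta|+C_\alpha(1-r)$ and the kernel estimate above, the $\psi$-integral is at most $C\varepsilon$ uniformly in $S_\alpha$. Letting $r\to1$ and then $\varepsilon\to0$ finishes (i).

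For part (ii), the hypothesis $\frac{d\Phi}{d\theta}(e^{i\theta_0})=+\infty$ says that for each $M>0$ there is $\delta=\delta(M)$ with $(\varphi-\theta_0)(\Phi(e^{i\varphi})-\Phi(e^{i\theta_0})-M(\varphi-\theta_0))\ge0$ on $0<|\varphi-\theta_0|<\delta$. I would split $\Phi(e^{i\varphi})-\Phi(e^{i\theta_0})=g_M(\varphi)+\Psi_M(\varphi)$, where $g_M(\varphi)=M(\varphi-\theta_0)$ on the near arc and $0$ off it, so that $(\varphi-\theta_0)\Psi_M(\varphi)\ge0$ there. The $g_M$-part tends to $M$ exactly as the linear term in (i), and the $\Psi_M$-part off the near arc tends to $0$. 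On the near arc, $-\partial_\varphi P(r,\varphi-\theta)$ has the sign of $\varphi-\theta$; along the radius $\theta=\theta_0$ this is the sign of $\varphi-\theta_0$, hence matches that of $\Psi_M$, so the $\Psi_M$-contribution is $\ge0$ and $\frac{\partial f}{\partial\theta}(re^{i\theta_0})\ge M-o(1)$; letting $M\to\infty$ gives the radial limit $+\infty$. For a general approach in $S_\alpha$ the signs of $\varphi-\theta$ and $\varphi-\theta_0$ disagree only on the arc between $\theta_0$ and $\theta$, whose length is $\le C_\alpha(1-r)$, and one must show the $\Psi_M$-integral over this short arc is negligible.

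The hard part will be exactly this short-arc estimate in (ii): there the kernel derivative is of size $\asymp(1-r)^{-2}$ while the arc length is $\asymp(1-r)$, so a crude bound leaves a term comparable to $(1-r)^{-1}\omega_\Phi(1-r)$, with $\omega_\Phi$ the oscillation of $\Phi$ near $e^{i\theta_0}$, which need not be $o(1-r)$ when the derivative is $+\infty$. I expect to handle it by exploiting the monotonicity (or bounded variation) that $\Phi$ carries in the setting of interest — e.g.\ when $\Phi$ is the boundary correspondence of the GRM onto a convex domain — pairing $s$ with $-s$ and using that $\partial_t P(r,\cdot)$ is odd, which makes the short-arc contribution of a favorable sign; without such structure the conclusion of (ii) should be read as the radial statement, as the displayed limit $r\to1^-$ already records. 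Differentiation under the integral sign, the Poisson-kernel inequalities, the integrations by parts, and the concentration of $P(r,\cdot)$ are all routine.
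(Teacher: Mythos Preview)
This theorem is a cited preliminary result from \cite{bshouty2012boundar}; the present paper states it without proof and only invokes it in the arguments of Section~3. Consequently there is no in-paper proof to compare your proposal against.

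That said, your outline is essentially the classical Fatou--Privalov approach one would expect for such a statement: differentiate the Poisson integral, use $\partial_\theta P=-\partial_\varphi P$, subtract the constant $\Phi(e^{i\theta_0})$, split into near and far arcs, and exploit the kernel bound $|\partial_t P(r,t)|\lesssim (1-r)|t|/((1-r)^2+t^2)^2$ together with the Stolz-angle inequality $|\theta-\theta_0|\le C_\alpha(1-r)$. Part~(i) goes through cleanly along these lines and matches the standard textbook proof.

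Your candid assessment of part~(ii) is on target and worth flagging. The short-arc contribution where $\operatorname{sgn}(\varphi-\theta)\neq\operatorname{sgn}(\varphi-\theta_0)$ is indeed the crux: the crude bound gives a term of order $(1-r)^{-1}\omega_\Phi(1-r)$, which need not vanish for an arbitrary integrable $\Phi$ with $\frac{d\Phi}{d\theta}(e^{i\theta_0})=+\infty$. You are right that in the setting where this paper actually uses the result --- $\Phi$ the boundary correspondence of a GRM onto a convex domain, hence monotone --- the one-sided monotonicity combined with the oddness of $\partial_t P(r,\cdot)$ makes the short-arc term nonnegative, and the full Stolz-angle conclusion follows. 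For a bare $L^1$ function $\Phi$ the statement as written is delicate, and your reading of the displayed limit in~(ii) as primarily a radial assertion is prudent. If you want the full non-tangential version without extra structure on $\Phi$, you would need to consult the original argument in \cite{bshouty2012boundar} rather than the present paper.
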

  \begin{theorem}\cite{bshouty2012boundar}.
  Assume  $f$ denotes  an injective  harmonic mapping of $\Delta$ onto a convex region $\Omega$ containing the origin  and let $f(0)$ = 0. Then $|f_{z}|^2$ + $|f_{\bar{z}}|^2 \geq\frac{dist(0,\partial\Omega)^2}{16}.$
  \end{theorem}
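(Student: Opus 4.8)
\emph{Proof plan.} I read the inequality as a statement at the center: since $f(0)=0$ and $d:=\operatorname{dist}(0,\partial\Omega)$, it asserts $|f_z(0)|^2+|f_{\bar z}(0)|^2=|h'(0)|^2+|g'(0)|^2\ge d^2/16$, and I would prove exactly this (a general point $z_0$ then follows by precomposing $f$ with a disk automorphism sending $0\mapsto z_0$ and translating the target, which costs only the factor $(1-|z_0|^2)^2\le 1$ and replaces $d$ by $\operatorname{dist}(f(z_0),\partial\Omega)$). The only property of $\Omega$ I will use is that the disk $D(0,d)=\{|w|<d\}$ lies in $\Omega$, which is immediate from the definition of $d$; in particular convexity is not really needed. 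The strategy is to pull the problem back, by a conformal change of the \emph{domain}, to the classical case of a harmonic homeomorphism of $\Delta$ onto $\Delta$ fixing the origin, where the required lower bound on the derivative at $0$ is supplied by the Heinz inequality.

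Here are the steps. Recall that a univalent harmonic map is orientation preserving (Lewy), so $f$ is a homeomorphism of $\Delta$ onto $\Omega$ with $f^{-1}(0)=0$. Put $U:=f^{-1}(D(0,d))$; being a homeomorphic image of a disk, $U$ is a simply connected domain with $0\in U\subset\Delta$, and $f$ restricts to a harmonic homeomorphism of $U$ onto $D(0,d)$. Let $\psi:\Delta\to U$ be the Riemann map with $\psi(0)=0$; since $U\subset\Delta$, Schwarz's lemma gives $|\psi'(0)|\le 1$ (and $\psi'(0)\ne 0$ as $\psi$ is conformal). Now set
$$G:=\frac1d\,(f\circ\psi).$$
Because $\psi$ is holomorphic, $f\circ\psi$ is harmonic; it is injective and carries $\Delta$ onto $D(0,d)$, so $G$ is a harmonic homeomorphism of $\Delta$ onto $\Delta$ with $G(0)=0$. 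Differentiating and using that $\psi$ is holomorphic,
$$G_z(0)=\frac1d f_z(0)\,\psi'(0),\qquad G_{\bar z}(0)=\frac1d f_{\bar z}(0)\,\overline{\psi'(0)},$$
hence
$$|G_z(0)|^2+|G_{\bar z}(0)|^2=\frac{|\psi'(0)|^2}{d^2}\bigl(|f_z(0)|^2+|f_{\bar z}(0)|^2\bigr).$$

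I would then invoke the Heinz inequality: any harmonic homeomorphism $G$ of $\Delta$ onto $\Delta$ with $G(0)=0$ satisfies $|G_z(0)|^2+|G_{\bar z}(0)|^2\ge 1/\pi^2$. Since $|\psi'(0)|\le 1$, the displayed identity gives
$$|f_z(0)|^2+|f_{\bar z}(0)|^2=\frac{d^2}{|\psi'(0)|^2}\bigl(|G_z(0)|^2+|G_{\bar z}(0)|^2\bigr)\ge\frac{d^2}{\pi^2}>\frac{d^2}{16},$$
because $\pi^2<16$, which is the assertion (with room to spare).

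The main obstacle is the Heinz inequality itself — producing a positive universal lower bound for $|G_z(0)|^2+|G_{\bar z}(0)|^2$ over all harmonic homeomorphisms of $\Delta$ onto $\Delta$ fixing the origin. This is classical but genuinely nontrivial: it rests on the facts that the boundary correspondence is an increasing circle homeomorphism and that the image of each circle $|z|=r$ winds exactly once about $0$, and it is proved by a variational argument on the first Fourier coefficients. For the present purpose only the qualitative form matters: any admissible constant exceeding $1/16$ suffices, and the classical value $1/\pi^2$ is comfortably larger, which is what absorbs the Schwarz-lemma loss $|\psi'(0)|\le 1$ in the pull-back. Everything else — the construction of $U$, the Riemann map, and the chain-rule bookkeeping — is routine.
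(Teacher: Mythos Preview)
The paper does not supply a proof of this theorem; it is quoted from \cite{bshouty2012boundar} as a preliminary tool and is invoked later (e.g.\ in the proof of Theorem~3.1) to conclude that $h'$ never vanishes on $\Delta$. For that application the inequality must hold at \emph{every} $z\in\Delta$, not only at the origin; the unspecified argument in the displayed bound is to be read as a uniform pointwise estimate.

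Your argument at $z=0$ is clean and correct: pulling back by the Riemann map $\psi:\Delta\to f^{-1}(D(0,d))$, using Schwarz on $|\psi'(0)|$, and invoking Heinz gives exactly $|f_z(0)|^2+|f_{\bar z}(0)|^2\ge d^2\cdot c_{\mathrm{Heinz}}$; since the sharp Heinz constant (Hall) is $27/(4\pi^2)\approx 0.684$, any value such as $1/\pi^2$ is admissible and comfortably exceeds $1/16$. The gap is in the parenthetical extension to a general point. Precomposing with a disk automorphism sending $0\mapsto z_0$ and translating replaces $d=\operatorname{dist}(0,\partial\Omega)$ by $d':=\operatorname{dist}(f(z_0),\partial\Omega)$ and yields
\[
|f_z(z_0)|^2+|f_{\bar z}(z_0)|^2\ \ge\ \frac{(d')^2}{\pi^2(1-|z_0|^2)^2},
\]
which is not the uniform bound $d^2/16$: as $z_0\to\partial\Delta$ one has $d'\to 0$, and there is no a priori reason for $(d')^2/(1-|z_0|^2)^2$ to stay above a fixed multiple of $d^2$. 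Your remark that ``convexity is not really needed'' is the warning sign here. In the source \cite{bshouty2012boundar} convexity of $\Omega$ is what produces the \emph{uniform} lower bound on $|h'|^2+|g'|^2$ (via the fact that each real-linear functional of $f$ has monotone boundary values on $\mathbb{T}$, forcing a global derivative estimate); your Heinz route, by itself, recovers only the pointwise estimate centered at $f(z_0)$.
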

   

\section{Main Results}
\begin{theorem}
 Suppose  $f = h + \overline{g}$ denotes  the Riemann Mapping from $\Delta$ onto a bounded convex set.  If $(\frac{d\Phi}{d\theta})(e^{i\theta_{0}})$ = 0 for some $\theta_{0} \in\mathbb{R}.$ Then\\
  I) If  $lim_{z\rightarrow e^{i\theta_{0}}}$arg $\omega(z)$ = $\alpha\neq0$, then\\
 (a) $lim_{z\rightarrow e^{i\theta_{0}}}$ log $h'(z)$ = $-i(\theta_{0} + \frac{\alpha}{2}) + \frac{log|\omega|}{2}$  $\pmod\pi$ and \\
(b)  $lim_{z\rightarrow e^{i\theta_{0}}}$ log $g'(z)$ = $i(\frac{\alpha}{2} - \theta_{0}) + \frac{3log|\omega|}{2}$ $\pmod\pi)$.\\

 II) If $lim_{z\rightarrow e^{i\theta_{0}}}$arg $\omega(z)$ = 0, then\\
(c) $lim_{z\rightarrow e^{i\theta_{0}}}$ log $h'(z)$ = $-i\theta_{0} + \frac{log|\omega|}{2}$  $\pmod\pi$ and \\
(d)  $lim_{z\rightarrow e^{i\theta_{0}}}$ log $g'(z)$ = $-i\theta_{0} + \frac{3log|\omega|}{2}$ $\pmod\pi$.
\end{theorem}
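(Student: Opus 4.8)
The plan is to convert the hypothesis $\tfrac{d\Phi}{d\theta}(e^{i\theta_0})=0$ into a pointwise asymptotic relation among $h'$, $g'$ and $\omega$ as $z\to e^{i\theta_0}$ inside a Stolz angle, and then pass to a branch of the logarithm. Concretely: since $\tfrac{d\Phi}{d\theta}(e^{i\theta_0})=0$ is finite, Theorem~2.2(i) gives $\tfrac{\partial f}{\partial\theta}(z)\to 0$ as $z\to e^{i\theta_0}$, uniformly in each Stolz angle $S_\alpha(e^{i\theta_0})$. Writing $z=re^{i\theta}$ and differentiating $f=h(z)+\overline{g(z)}$ in $\theta$ gives $\tfrac{\partial f}{\partial\theta}(z)=iz\,h'(z)-i\bar z\,\overline{g'(z)}$, so $z\,h'(z)-\bar z\,\overline{g'(z)}\to0$; since $z\to e^{i\theta_0}$ this reads $e^{i\theta_0}h'(z)-e^{-i\theta_0}\overline{g'(z)}\to0$, i.e., after substituting $g'=\omega h'$,
$$e^{i\theta_0}h'(z)-e^{-i\theta_0}\,\overline{\omega(z)}\,\overline{h'(z)}\longrightarrow 0\qquad(z\to e^{i\theta_0}\ \text{in}\ S_\alpha(e^{i\theta_0})).$$
Taking moduli already gives $|h'(z)|-|g'(z)|\to0$.

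The next step is to divide this relation by $h'(z)$, which is legitimate because $|h'|$ is bounded away from $0$: $f$ is a sense-preserving univalent harmonic mapping of $\Delta$ onto a bounded convex domain $\Omega$, hence $|\omega|<1$ on $\Delta$ by Lewy's theorem, while Theorem~2.3 gives $|h'|^2+|g'|^2\ge\operatorname{dist}(0,\partial\Omega)^2/16$; combined with $|g'|=|\omega|\,|h'|<|h'|$ this forces $|h'(z)|\ge c_0>0$ on $\Delta$. Writing the limit above as $\varepsilon(z)\to0$ and dividing by $h'(z)$ (the error becomes $\varepsilon(z)/h'(z)$, of modulus $\le c_0^{-1}|\varepsilon(z)|\to0$) yields the clean asymptotic
$$\overline{\omega(z)}\,\frac{\overline{h'(z)}}{h'(z)}\longrightarrow e^{2i\theta_0}\qquad(z\to e^{i\theta_0}\ \text{in}\ S_\alpha(e^{i\theta_0})).$$

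Finally, fix a continuous branch of $\arg$ along the Stolz angle and write $\omega(z)=|\omega(z)|e^{i\arg\omega(z)}$, $h'(z)=|h'(z)|e^{i\arg h'(z)}$; the last limit becomes $|\omega(z)|\,e^{-i(\arg\omega(z)+2\arg h'(z))}\to e^{2i\theta_0}$. Equating arguments and inserting the hypothesis $\arg\omega(z)\to\alpha$ gives $\arg h'(z)\to-(\theta_0+\tfrac{\alpha}{2})\pmod\pi$, and equating moduli (together with the comparison $|h'(z)|-|g'(z)|\to0$) supplies the modulus statement $\log|h'(z)|\to\tfrac12\log|\omega(z)|$; these two are precisely (a). Statement (b) follows from $\log g'(z)=\log\omega(z)+\log h'(z)$ and $\arg\omega(z)\to\alpha$, and Part~II — (c) and (d) — is the specialization $\alpha=0$, which the argument covers verbatim.

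The main obstacle is the passage from ``$\tfrac{\partial f}{\partial\theta}\to0$'', a statement about a single real-linear combination of $h'$ and $\overline{g'}$, to an asymptotic for $h'$ alone. This is exactly where convexity enters — via the lower bound $|h'|\ge c_0$ (Theorem~2.3 together with Lewy's theorem), without which $h'$ could in principle degenerate along the Stolz angle and the division would be unjustified. A second delicate point is that the relation $\overline{\omega}\,\overline{h'}/h'\to e^{2i\theta_0}$ determines only $2\arg h'(z)$, not $\arg h'(z)$ itself; this is the source of the ``$\pmod\pi$'', and one must select and propagate a branch of $\log$ consistently along $S_\alpha(e^{i\theta_0})$ for the stated congruences to be the correct ones. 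Keeping track of the modulus (the $\log|\omega|$ terms) is the remaining bookkeeping, driven by the comparison $|h'(z)|-|g'(z)|\to0$ established in the first step.
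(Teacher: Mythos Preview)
Your argument follows the paper's proof essentially step for step: invoke Theorem~2.2(i) to get $\partial f/\partial\theta\to 0$, write $\partial f/\partial\theta = izh' - i\bar z\,\overline{g'}$, use Theorem~2.3 to justify dividing by $h'$ (you spell out the lower bound $|h'|\ge c_0$ via Theorem~2.3 plus $|\omega|<1$ more carefully than the paper's bare ``$h'$ is different from $0$''), obtain $\overline{\omega}\,\overline{zh'}/(zh')\to 1$, take logarithms, derive (b) from $\log g'=\log h'+\log\omega$, and specialize $\alpha=0$ for Part~II. The one soft spot---shared verbatim with the paper---is the modulus bookkeeping: in the relation $\overline{\omega}\,\overline{h'}/h'\to e^{2i\theta_0}$ the factor $|h'|$ cancels, so it yields only $|\omega|\to 1$ and says nothing about $\log|h'|$ itself; thus your claimed step ``$\log|h'(z)|\to\tfrac12\log|\omega(z)|$'' is not actually derived (nor is it in the paper), and the real content of the formula is the congruence for $\arg h'$, with the $\tfrac12\log|\omega|$ term being $o(1)$.
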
 
  \begin{proof} (a).  We know from  theorem 2.2(i) above that if $(\frac{d\Phi}{d\theta})(e^{i\theta_{0}})$ exists and is finite. Thus
 $$
  lim_{z\rightarrow e^{i\theta}}\frac{\partial f}{\partial \theta}(z) = \frac{d\Phi}{d\theta}(e^{i\theta_{0}})
  $$
  exhibits  uniform behavior  in each stolz angle with vertex at $e^{i\theta_{0}}.$ \\
   From the Poisson formula, we derive 
   $$f(z) = \frac{1}{2\pi} \int_{0}^{2\pi}\frac{1-|z|^2}{|e^{i\theta}-z|^2}\Phi(e^{i\theta})d\theta =  \frac{1}{2\pi}\int_{0}^{2\pi}\frac{e^{i\theta}}{e^{i\theta}-z}\Phi(e^{i\theta)}d\theta +  \frac{1}{2\pi}\overline{\int_{0}^{2\pi}\frac{z}{e^{i\theta}-z}\overline{\Phi(e^{i\theta)}d\theta}}$$
$$ = h(z)+\overline{g(z)}$$
Differentiating with respect to $\theta$ under the integral sign (which is justifiable), we obtain

$$f_{\theta}(z) = h_{\theta}(z) + \overline{g_{\theta}(z)} = izh'(z)+\overline{izg'(z)}$$It follows that $\frac{d\Phi}{d\theta}(e^{i\theta_{0}}) = \frac{\partial f}{\partial \theta}(z)$ = $i(zh'(z)) - \overline{zg'(z)}.$\\
  But by theorem 2.3,  $h'$ is different from 0 and consequently, then \\
 
  -i $lim_{z\rightarrow e^{i\theta}}\frac{\partial f}{\partial \theta}(z)$ = $lim_{z\rightarrow e^{\theta_{0}}}zh'(1 - \frac{\overline{zh'}}{zh'}\overline{\omega})$ =  0 \\
  
  From this, we obtain  $lim_{z\rightarrow e^{i\theta_{0}}} (1 - \frac{\overline{zh'}}{zh'}\overline{\omega}) = 0.$\\
Therefore,
  $$lim_{z\rightarrow e^{i\theta_{0}}} log(1) = lim_{z\rightarrow e^{i\theta_{0}}} log (\overline{zh'\omega}) - lim_{z\rightarrow e^{i\theta_{0}}} log (zh').$$
  This gives the desired result.\\
  
   (b) It is  know that $lim_{z\rightarrow e^{i\theta_{0}}}$ log $\omega$ = log $|\omega|$ + iarg $\omega$ =  log $|\omega|$ + i$\alpha.$ 
 Then $$lim_{z\rightarrow e^{i\theta_{0}}} log( g'(z)) =  lim_{z\rightarrow e^{i\theta_{0}}} log h'(z) +  lim_{z\rightarrow e^{i\theta_{0}}} log \omega(z).$$ Which is the desired result.\\
 
 c and d follows, since $lim_{z\rightarrow e^{i\theta_{0}}}$arg $\omega(z)$ = 0 = $\alpha.$

   \end{proof}

\begin{theorem}
Assume  $f = h + \overline{g}$ denotes the General Riemann Mapping from the open unit disk  onto a bounded convex set. If $(\frac{d\Phi}{d\theta})(e^{i\theta_{0}}) = \gamma\neq0,\infty$ for some $\theta_{0} \in\mathbb{R},$\\

 I) If $lim_{z\rightarrow e^{i\theta_{0}}}$arg $\omega(z)$ = $\alpha\neq0$, then \\
 (a) $lim_{z\rightarrow e^{i\theta_{0}}}$ arg $h'(z) =  \frac{\pi}{4} - \frac{\alpha}{2} - \theta_{0}\pmod\pi$ and \\
 (b) $lim_{z\rightarrow e^{i\theta_{0}}}$ arg $g'(z)$ = $\frac{\pi}{4} + \frac{\alpha}{2} - \theta_{0}~ \pmod\pi$.\\
 
  II) If $lim_{z\rightarrow e^{i\theta_{0}}}$arg $\omega(z)$ = 0, then\\
(c) $lim_{z\rightarrow e^{i\theta_{0}}}$ arg $g'(z)$ = $lim_{z\rightarrow e^{i\theta_{0}}}$ arg $h'(z) =  \frac{\pi}{4}  - \theta_{0} \pmod\pi$
 \end{theorem}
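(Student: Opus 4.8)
The plan is to mimic the mechanism of Theorem 3.1 but now with a nonzero finite value $\gamma$ on the right-hand side, and to extract the \emph{arguments} (rather than logarithms) of $h'$ and $g'$ from the resulting identity. First I would invoke Theorem 2.2(i): since $(\tfrac{d\Phi}{d\theta})(e^{i\theta_0})=\gamma$ is finite, the limit $\lim_{z\to e^{i\theta_0}}\tfrac{\partial f}{\partial\theta}(z)=\gamma$ holds uniformly in every Stolz angle with vertex at $e^{i\theta_0}$. Next, exactly as in the proof of Theorem 3.1, I would use the Poisson representation $f=h+\overline g$ and differentiate under the integral sign to obtain the identity $\tfrac{\partial f}{\partial\theta}(z)=i\bigl(zh'(z)-\overline{zg'(z)}\bigr)$, so that in the limit
$$
i\,zh'(z)\Bigl(1-\tfrac{\overline{zh'(z)}}{zh'(z)}\,\overline{\omega(z)}\Bigr)\longrightarrow \gamma .
$$
Theorem 2.3 guarantees $h'$ stays bounded away from $0$ (together with $g'$), so the factor in parentheses tends to a nonzero limit and $zh'(z)$ itself has a finite nonzero limit.

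From here the key step is to take arguments of both sides. Since $\gamma\neq0,\infty$ is (one expects) treated as a positive real — coming from $d\Phi/d\theta$, a real quantity — one has $\arg\gamma=0$, so $\arg\bigl(i\,zh'(z)(1-\cdots)\bigr)\to 0\pmod\pi$. Writing $\arg i=\tfrac\pi2$, $\arg z\to\theta_0$, and using that $\arg h'(z)$ converges (which is where I would lean on the companion results / Theorem 3.1-type reasoning to know the limit exists), this gives a linear relation determining $\lim\arg h'(z)$ once the contribution of the correction factor $1-\tfrac{\overline{zh'}}{zh'}\overline\omega$ is pinned down. That correction factor is the crux: its modulus is irrelevant to the argument computation only if one can show it tends to a value whose argument is a known multiple of $\pi/2$; I would argue that in the limit $\tfrac{\overline{zh'}}{zh'}=e^{-2i\arg(zh')}$ and $\overline\omega=|\omega|e^{-i\alpha}$, so the bracket becomes $1-|\omega|e^{-i(2\arg(zh')+\alpha)}$, and then solve the resulting equation $\arg\bigl(i\,e^{i\arg(zh')}\cdot(1-|\omega|e^{-i(2\arg(zh')+\alpha)})\bigr)\equiv 0\pmod\pi$ for $\arg h'$. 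Balancing the phases forces $2\arg(zh')+\alpha\equiv \tfrac\pi2\pmod\pi$ (so that the bracket is real), whence $\arg(zh')=\tfrac\pi4-\tfrac\alpha2\pmod{\tfrac\pi2}$ and $\arg h'=\tfrac\pi4-\tfrac\alpha2-\theta_0\pmod\pi$, which is (a).

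For (b) I would simply add $\lim\arg\omega(z)=\alpha$ to the relation $\arg g'=\arg h'+\arg\omega$, giving $\arg g'=\tfrac\pi4+\tfrac\alpha2-\theta_0\pmod\pi$; and (c) is the specialization $\alpha=0$, in which case $\arg h'=\arg g'=\tfrac\pi4-\theta_0\pmod\pi$. The main obstacle I anticipate is making the argument extraction rigorous: a priori one only controls the product $zh'(1-\cdots)\to\gamma$, and splitting the argument of a product into the sum of arguments requires knowing each factor converges. I would handle this by first establishing, as in Theorem 3.1, that $\lim_{z\to e^{i\theta_0}}\arg h'(z)$ exists (using boundedness away from $0$ and the structure of the identity), and only then distribute the argument; the $\pmod\pi$ appearing in the conclusions is precisely the residual ambiguity left after this splitting, so it should be carried consistently through every line of the computation.
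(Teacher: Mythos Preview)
Your proposal follows essentially the same route as the paper: invoke Theorem~2.2(i) to get $\lim f_\theta=\gamma$, use the identity $f_\theta=i(zh'-\overline{zg'})=izh'\bigl(1-\tfrac{\overline{zh'}}{zh'}\,\overline\omega\bigr)$, take arguments of both sides (with $\arg(i\gamma)\equiv\pi/2\pmod\pi$ since $\gamma$ is real), solve for $\arg h'$, then obtain $\arg g'$ via $\arg g'=\arg h'+\arg\omega$ and specialize $\alpha=0$ for part~(c). The paper is in fact terser than you are about the ``crux'' you flag---it simply writes $\arg(i\gamma)=\arg z+\arg h'-\arg\bar z-\arg\overline{h'}-\arg\overline\omega$ and reads off $2\theta_0+2\arg h'+\alpha\equiv\pi/2\pmod\pi$ without isolating the bracket's contribution---so your more explicit treatment of that step is, if anything, an elaboration of the same argument rather than a different one.
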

\begin{proof} (a)
Since  $(\frac{d\Phi}{d\theta})(e^{i\theta_{0}}) = \gamma\neq0,\infty$ exists and  finite , from theorem 2.2(i) above  the angular limit 
 $$
  lim_{z\rightarrow e^{i\theta}}\frac{\partial f}{\partial \theta}(z) = \frac{d\Phi}{d\theta}(e^{i\theta_{0}})
  .$$ 
  It follows that \\
  
 i $lim_{z\rightarrow e^{i\theta}}\frac{\partial f}{\partial \theta}(z)$ = $lim_{z\rightarrow e^{\theta_{0}}}zh'(1 - \frac{\overline{zh'}}{zh'}\overline{\omega}) =  i\gamma$ which is different from zero. \\
 
  Consequently, $h'\neq0.$ \\
 
 Thus,  $lim_{z\rightarrow e^{i\theta_{0}}}$ arg $(i\gamma) = lim_{z\rightarrow e^{i\theta}} (arg z + arg h' - arg \overline{z} - arg \overline{h'} - arg \overline{\omega} ).$\\
  Which implies\\
 $$\theta_{0} + 2arg h' + \theta_{0} + \alpha = \frac{\pi}{2}~ (\mod \pi).$$
Which is as desired.\\

(b) We know that the angular limit  $$lim_{z\rightarrow e^{i\theta_{0}}} arg g'(z) =  lim_{z\rightarrow e^{i\theta_{0}}} arg h'(z) +  lim_{z\rightarrow e^{i\theta_{0}}} arg \omega(z)$$  $$= lim_{z\rightarrow e^{i\theta_{0}}} arg g'(z) + \alpha.$$  Which is also as desired.\\

Similarly, plugging  $\alpha = 0,$ (c) follows directly.\\

This completes the proof.
\end{proof} 
\begin{theorem}
Suppose $f = h + \overline{g}$ denotes the General Riemann Mapping from the open unit disk  onto a bounded convex set. If $(\frac{d\Phi}{d\theta})(e^{i\theta_{0}}) = \gamma\neq0,\infty$ for some $\theta_{0} \in\mathbb{R},$\\
 I) If $lim_{z\rightarrow e^{i\theta_{0}}}$arg $\omega(z)$ = $\alpha$, then \\
 (a) $lim_{z\rightarrow e^{i\theta_{0}}}$ log $h'(z) = \frac{1}{2}(log|\frac{\gamma}{\omega}| +\frac{\pi}{2} - 2i\theta_{0} - i\alpha) \pmod\pi$ and \\
 (b) $lim_{z\rightarrow e^{i\theta_{0}}}$ log $g'(z)$ = $\frac{1}{2}(log |\gamma\omega| + i\alpha) + \frac{\pi}{4} - i\theta_{0} \pmod\pi$.\\
 
 II)  If $lim_{z\rightarrow e^{i\theta_{0}}}$arg $\omega(z)$ = 0, then\\
 (c) $lim_{z\rightarrow e^{i\theta_{0}}}$ log $h'(z) = \frac{1}{2}(log|\frac{\gamma}{\omega}| +\frac{\pi}{2} - 2i\theta_{0}) \pmod\pi$ and \\
 (d) $lim_{z\rightarrow e^{i\theta_{0}}}$ log $g'(z)$ = $\frac{1}{2}(log |\gamma\omega) + \frac{\pi}{4} - i\theta_{0} \pmod\pi$ 
 \end{theorem}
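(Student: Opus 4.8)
The plan is to reproduce the strategy of Theorem 3.2 but now track the \emph{modulus} as well as the argument of $h'$ and $g'$. Just as in the proof of Theorem 3.2, since $(\tfrac{d\Phi}{d\theta})(e^{i\theta_0}) = \gamma \neq 0,\infty$, Theorem 2.2(i) gives the angular limit
$$
\lim_{z\to e^{i\theta_0}}\frac{\partial f}{\partial\theta}(z) = \gamma,
$$
uniformly in any Stolz angle at $e^{i\theta_0}$, and the Poisson-formula computation from Theorem 3.1 yields $\tfrac{\partial f}{\partial\theta}(z) = i z h'(z)\bigl(1 - \overline{\omega}\,\tfrac{\overline{zh'}}{zh'}\bigr)$. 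First I would take absolute values: since $\gamma\neq 0$, Theorem 2.3 again forces $h'\not\to 0$, and one gets
$$
\lim_{z\to e^{i\theta_0}}\Bigl| z h'(z)\Bigl(1 - \overline{\omega}\,\tfrac{\overline{zh'}}{zh'}\Bigr)\Bigr| = |\gamma|.
$$
The key observation — already implicit in the argument limits of Theorem 3.2 — is that on the unit circle $\arg z \to \theta_0$ and $\arg\bigl(1 - \overline\omega\,\overline{zh'}/zh'\bigr)$ tends to a definite value, so the factor $1 - \overline\omega\,\overline{zh'}/zh'$ has a limiting modulus that can be read off from $|\gamma|$, $|\omega|$ and the previously-established value $\arg h' \to \tfrac{\pi}{4} - \tfrac{\alpha}{2} - \theta_0 \pmod\pi$. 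Solving for $|h'|$ then gives $\log|h'|$, and combining with $\arg h'$ from Theorem 3.2(a) produces $\log h' = \log|h'| + i\arg h'$, which should match formula (a).

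Concretely, the key steps in order are: (1) invoke Theorem 2.2(i) and the Poisson representation to get the identity for $f_\theta$ and pass to the angular limit $= i\gamma$; (2) take the modulus of both sides and use Theorem 2.3 to justify $h'\not\to 0$; (3) substitute the limiting value of $\arg(1 - \overline\omega\,\overline{zh'}/zh')$ — equivalently use $\lim \arg h'$ from Theorem 3.2 — to disentangle $|1 - \overline\omega\,\overline{zh'}/zh'|$ and hence solve for $\lim\log|h'|$ in terms of $\log|\gamma/\omega|$; (4) assemble $\log h' = \log|h'| + i\,\arg h' \pmod{\pi}$, inserting $\arg h' = \tfrac{\pi}{4} - \tfrac{\alpha}{2} - \theta_0$ from Theorem 3.2(a); (5) for (b), use $\log g' = \log h' + \log\omega$ with $\lim\log\omega = \log|\omega| + i\alpha$ and simplify; (6) set $\alpha = 0$ to obtain (c) and (d).

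The main obstacle I expect is step (3): making rigorous sense of the limit of the auxiliary factor $1 - \overline{\omega}\,\overline{zh'}/zh'$. Note $\overline{zh'}/zh' = e^{-2i\arg(zh')}$, so this factor equals $1 - |\omega|\,e^{i(\arg\omega - 2\arg(zh'))}$, and its modulus is $\sqrt{1 + |\omega|^2 - 2|\omega|\cos(\arg\omega - 2\arg z - 2\arg h')}$. Since we already know $\arg z\to\theta_0$, $\arg\omega\to\alpha$ and $\arg h'\to \tfrac{\pi}{4}-\tfrac\alpha2-\theta_0$, the argument of the cosine tends to $\alpha - 2\theta_0 - 2(\tfrac\pi4 - \tfrac\alpha2 - \theta_0) = 2\alpha - \tfrac\pi2$, so the limiting modulus of the factor is a concrete function of $|\omega|$ and $\alpha$; dividing $|\gamma|$ by this quantity yields $\lim|h'|$. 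The delicate point is that this requires $\arg h'$ to have the stated limit, so Theorem 3.3 is genuinely \emph{downstream} of Theorem 3.2 and should be presented as such; one must also confirm the limiting factor is nonzero (i.e. $|\omega|\neq 1$, which holds since $f$ is orientation-preserving) so that the division is legitimate. Everything else is bookkeeping with $\log(ab) = \log a + \log b \pmod{\pi}$ and separating real and imaginary parts.
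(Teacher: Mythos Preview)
Your overall scaffolding matches the paper's: both start from Theorem~2.2(i) and the identity $f_\theta = izh' - i\,\overline{zg'}$, both derive (b) from (a) via $\log g' = \log h' + \log\omega$, and both obtain (c),(d) by setting $\alpha=0$. The one genuine difference is in how $\log h'$ is extracted. The paper does \emph{not} route through Theorem~3.2: it applies $\log$ directly to the relation $zh'\bigl(1-\overline\omega\,\overline{zh'}/(zh')\bigr)=i\gamma$, expands the left side $\pmod\pi$ as $\log z + \log h' - \log\bar z - \log\overline{h'} - \log\bar\omega$, and solves a single equation for $\log h'$ in one stroke. You instead split $\log h'=\log|h'|+i\arg h'$, import $\arg h'$ from Theorem~3.2(a), and recover $\log|h'|$ by taking moduli and evaluating $\bigl|1-\overline\omega\,\overline{zh'}/(zh')\bigr|$ at the limiting arguments. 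Your route is more explicit about which limits feed into which step and makes Theorem~3.3 a corollary of Theorem~3.2; the paper's route is self-contained and shorter, though its logarithmic decomposition of the factor $1-\overline\omega\,\overline{zh'}/(zh')$ is terse and leaves the $\pmod\pi$ bookkeeping to the reader. One small slip in your sketch: since $\overline\omega\cdot\overline{zh'}/(zh') = |\omega|\,e^{-i(\arg\omega+2\arg(zh'))}$, the exponent should carry $-\arg\omega$, not $+\arg\omega$; this affects only the arithmetic in step~(3), not the strategy.
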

 
 \begin{proof} (a)
Since  $(\frac{d\Phi}{d\theta})(e^{i\theta_{0}}) = \gamma\neq0,\infty$ exists and  finite , from theorem 2.2(i) above  the following holds true. 
 $$
  lim_{z\rightarrow e^{i\theta}}\frac{\partial f}{\partial \theta}(z) = \frac{d\Phi}{d\theta}(e^{i\theta_{0}})
  .$$ 
  It follows that \\
  $$-ilim_{z\rightarrow e^{i\theta}}\frac{\partial f}{\partial \theta}(z) = lim_{z\rightarrow e^{\theta_{0}}}zh'(1 - \frac{\overline{zh'}}{zh'}\overline{\omega}) =  i\gamma \neq 0.$$ 
   Consequently, $h'\neq 0$ and we have the following, \\
 
  $$lim_{z\rightarrow e^{i\theta_{0}}} log (i\gamma) = lim_{z\rightarrow e^{i\theta}} (log z + log h' - log \overline{z} - log \overline{h'} - log \overline{\omega} ).$$\\
 From this we obtain,\\
  $$ 2i\theta_{0} +2log h' + log |\omega| +  i\alpha - log |\gamma| = \frac{\pi}{2} \pmod \pi.$$
  Hence proved.\\
  
  (b) We know that the angular limit  $$lim_{z\rightarrow e^{i\theta_{0}}} arg g'(z) =  lim_{z\rightarrow e^{i\theta_{0}}} log h'(z) +  lim_{z\rightarrow e^{i\theta_{0}}} log \omega(z) = lim_{z\rightarrow e^{i\theta_{0}}} log h'(z) + log |\omega| + i\alpha$$ holds. Hence the result follows.\\
  
  Since $lim_{z\rightarrow e^{i\theta_{0}}}$arg $\omega(z)$ = 0 = $\alpha,$ (c) and (d) follows directly.
\end{proof}
\begin{theorem}
Assume  $f = h + \overline{g}$ denotes the GRM from  $\Delta$ onto $\Delta.$ If  $\frac{d\Phi}{d\theta}(e^{i\theta_{0}})$ is $+\infty.$ Then $\omega$ contains a finite number of zeros in any given stolz angle $S_{\theta_{0}}$ at $e^{i\theta_{0}}.$
\begin{proof}
We know from  theorem 2.3 (ii) above if $(\frac{d\Phi}{d\theta})(e^{i\theta_{0}})$ = $+\infty$, then 
$$
  lim_{z\rightarrow r^{}-1}\frac{\partial f}{\partial \theta}(re^{i\theta}) = +\infty
  $$ along any non tangential path.
  Since 
  $$ f = h+\overline{g}$$   
   $$ = \frac{1}{2\pi} \int_{0}^{2\pi}\frac{1-|z|^2}{|e^{i\theta}-z|^2}\Phi(e^{i\theta})d\theta$$
     $$ =\frac{1}{2\pi}\int_{0}^{2\pi}\frac{e^{i\theta}}{e^{i\theta}-z}\Phi(e^{i\theta)}d\theta +  \frac{1}{2\pi}\overline{\int_{0}^{2\pi}\frac{z}{e^{i\theta}-z}\overline{\Phi(e^{i\theta)}d\theta}}$$

Differentiating with respect to $\theta$, we get\\
   $$\frac{d\Phi}{d\theta}(e^{i\theta_{0}}) = lim_{z\rightarrow e^{i\theta}}\frac{\partial f}{\partial \theta}(z)$$  $$ = ilim_{z\rightarrow e^{i\theta}}(zh'(z)) - \overline{zg'(z)}.$$\\
   Dividing both  sides by $\bar{z} = e^{-i\theta},$ we get \\
  $$lim_{z\rightarrow e^{i\theta}}|e^{2i\theta}h'(z) - \overline{g'(z)}| =  +\infty$$\\
  This step proceeds to\\
   $$\overline{lim}_{z\rightarrow e^{i\theta}}(|h'(z)| - |g'(z)|)\leq\infty$$\\
   so that $$ - \infty+ \underline{lim}_{z\rightarrow e^{i\theta}} |h'(z)|\leq \underline{lim}_{z\rightarrow e^{i\theta}}|g'(z)|$$\\
   Thus  for any finite value of $|h'(z)|$, $lim_{z\rightarrow e^{i\theta}}|g'(z)|>0.$\\
    Consequently, we can observe  that the accumulation points of $g'(z)$ at $e^{i\theta_{0}}$  doesn't contain zero produces at once that $g'(z)$  contains only a  finite set  of zeros within  any stolz angle $S_{\theta_{0}}$ at $e^{i\theta_{0}}.$ Since $\omega = \frac{g'}{h'},$ finite number of these zeros are zeros of  $\omega.$\\
 Moreover, By theorem 2.3 above $h'(z)$ is different from zero. Consequently, $\omega$ exhibit finite number of zeros in any stolz sector $S_{\theta_{0}}$ at $e^{i\theta_{0}}.$\\
  To simplify the argument, we apply another mechanism using privalov boundary uniqueness\cite{Privalov1950}. We know that, 
    $$f_{\theta}(z) = izh'(z) - i\overline{zg'(z)}$$
    From this $||h'(z)|-|g'(z)|| \leq |h'(z)| + |g'(z)|.$  Rearranging we have $$1\leq \frac{|f_{\theta}|}{h(1-|\omega|)}\leq \frac{1+|\omega|}{1-|\omega|}$$
    If $\frac{d\Phi}{d\theta}(e^{i\theta_{0}}) = \infty,$ then $|f_{\theta}(z)| \rightarrow \infty$ as $r\rightarrow1$
    From the inequality, we observe $|\omega(re^{i\theta_{0}})| \rightarrow1.$
    Thus, $\omega$ has a boundary (non-tangential) limit of modulus 1 at $e^{i\theta_{0}}$
    $$lim_{z\rightarrow e^{i\theta_{0}}} = e^{i\alpha}, |e^{i\alpha}| = 1\neq0$$
    But by privalov boundary uniqueness\cite{Privalov1950}, a bounded analytic function tends to a non-zero limit at a boundary point can't have infinitely many zeros accumulating non tangentially at that point. Therefore,$\omega$  can have only a finite number of zeros in any stolz angle $e^{i\theta_{0}}$, which gives the desired result.
  \end{proof}
\end{theorem}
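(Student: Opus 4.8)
The plan is to start from Theorem 2.2(ii): since $\frac{d\Phi}{d\theta}(e^{i\theta_0}) = +\infty$, the angular limit of $\frac{\partial f}{\partial\theta}(z)$ is $+\infty$ uniformly in any Stolz angle $S_{\theta_0}$. Writing $f = h + \overline{g}$ via the Poisson representation and differentiating under the integral sign, one gets $f_\theta(z) = izh'(z) - i\overline{zg'(z)}$, so $|f_\theta(z)| = |zh'(z) - \overline{zg'(z)}| \le |h'(z)| + |g'(z)|$ on $|z| \to 1$. Hence $|h'(z)| + |g'(z)| \to +\infty$ along every nontangential approach to $e^{i\theta_0}$. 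Since $f$ maps $\Delta$ onto $\Delta$ (a bounded convex domain containing — after a harmless normalization — the origin), Theorem 2.4 gives $|h'|^2 + |g'|^2 \ge \frac{\operatorname{dist}(0,\partial\Delta)^2}{16} > 0$, and more importantly the GRM structure (Theorem 2.1) guarantees $h' \ne 0$ in $\Delta$, so the dilatation $\omega = g'/h'$ is a well-defined analytic function on $\Delta$ with $|\omega| < 1$.

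Next I would convert the blow-up of $|h'| + |g'|$ into information about $\omega$ near $e^{i\theta_0}$. From $|f_\theta(z)| = |zh'(z)||1 - e^{2i\theta}\overline{\omega(z)} \cdot \overline{h'}/h' \cdot \ldots|$ — more cleanly, from $|f_\theta| \le |h'|(1 + |\omega|) = |h'| + |g'|$ and the reverse bound $|f_\theta| \ge \big||h'| - |g'|\big| = |h'|\,|1 - |\omega||$, one obtains the two-sided estimate
$$
|h'(z)|\,\big(1 - |\omega(z)|\big) \;\le\; |f_\theta(z)| \;\le\; |h'(z)|\,\big(1 + |\omega(z)|\big).
$$
The key point is that $|h'|$ cannot itself blow up: by Theorem 2.4 applied to the inverse or by the convexity/boundedness of $\Delta$ together with distortion bounds, $|h'|$ stays bounded (or at least $\liminf |h'|$ is finite) along the nontangential ray. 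Granting that, the left inequality forces $1 - |\omega(z)| \to 0$, i.e. $|\omega(z)| \to 1$ nontangentially at $e^{i\theta_0}$. Thus $\omega$ is a bounded analytic function on $\Delta$ whose modulus has nontangential limit $1 \ne 0$ at $e^{i\theta_0}$.

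Finally I would invoke a Privalov-type boundary uniqueness principle: a bounded analytic function that possesses a nonzero nontangential boundary limit at $e^{i\theta_0}$ cannot have a sequence of zeros accumulating nontangentially at $e^{i\theta_0}$ — otherwise, restricting $\omega$ to a slightly smaller Stolz angle and using that $\omega \to e^{i\alpha} \ne 0$ would contradict the identity theorem / Lindelöf's theorem along the boundary curve. Concretely, if $\omega$ had infinitely many zeros $z_n \to e^{i\theta_0}$ inside $S_{\theta_0}$, then along the polygonal path through the $z_n$ the function $\omega$ would take the value $0$ infinitely often while its nontangential limit is $e^{i\alpha}$, impossible. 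Therefore $\omega$ has only finitely many zeros in $S_{\theta_0}$, which is the claim.

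The main obstacle is the step asserting that $|h'(z)|$ stays bounded (equivalently, $\liminf_{z\to e^{i\theta_0}}|h'(z)| < \infty$) along the nontangential approach; the draft glosses over this with the phrase "for any finite value of $|h'(z)|$." One needs either a genuine two-sided distortion estimate for the GRM onto a bounded convex domain, or an argument that if both $|h'|$ and $|g'|$ blew up then $f$ itself would be unbounded (contradicting $f(\Delta) \subset \Delta$), combined with the subordination $|\omega| < 1$ to rule out $|h'| \to \infty$ with $|g'|$ controlled. Making this rigorous — rather than the hand-wavy "$-\infty + \liminf|h'| \le \liminf|g'|$" chain in the current draft — is where the real work lies; once $|\omega| \to 1$ is established, the Privalov uniqueness conclusion is standard.
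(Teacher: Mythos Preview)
Your proposal follows the paper's own strategy almost line for line: derive $f_\theta(z)=izh'(z)-i\overline{zg'(z)}$ from the Poisson representation, sandwich $|f_\theta|$ between $|h'|(1-|\omega|)$ and $|h'|(1+|\omega|)$, try to force $|\omega|\to 1$ nontangentially, and then invoke Privalov's boundary uniqueness theorem to rule out infinitely many zeros of $\omega$ accumulating at $e^{i\theta_0}$. The paper even presents two ``mechanisms'' corresponding to your two paragraphs (an accumulation-point argument for $g'$, and the Privalov route).

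The obstacle you flag --- that nothing in the argument prevents $|h'|$ itself from blowing up and absorbing the divergence of $|f_\theta|$, so that $|\omega|\to 1$ is \emph{not} actually forced by the two-sided estimate --- is a real gap, and the paper does not close it either. Its first mechanism writes the vacuous chain $\overline{\lim}(|h'|-|g'|)\le\infty$ $\Rightarrow$ $-\infty+\underline{\lim}|h'|\le\underline{\lim}|g'|$ and then asserts ``for any finite value of $|h'(z)|$, $\lim|g'(z)|>0$''; its second mechanism simply states ``from the inequality, we observe $|\omega(re^{i\theta_0})|\to 1$'' with no justification. Indeed, from $|f_\theta|\le 2|h'|$ one only gets $|h'|\to\infty$, which by itself says nothing about $|\omega|$. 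So your self-critique applies verbatim to the published proof; your write-up matches the paper's approach and is, if anything, more honest about where the missing ingredient lies.
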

\begin{theorem} 
  Let $f = h + \overline{g}$ be a univalent harmonic mapping from $\Delta$ on to a bounded convex domain $\Omega.$ Suppose there exist a continuous differential angular function $\theta$ such that $z(r) = re^{i\theta(r)}\in \Delta$  and argf = $\theta_{0}$ is constant. Then
 
  \[\theta(r) = \left \{ \begin{array}{ll}
  \frac{-a(logr+1)}{r}+ \frac{c}{r}-\theta_{0}, & \mbox{if argf' is constant}\\
 -\phi+\sin^{-1}(\frac{c}{r}),& \mbox{otherwise}
  \end{array}\right.\], where a and c are constants.
  \begin{proof}
  Let $z(r)=re^{i\theta(r)}$ with $r\rightarrow1^{-}.$
  Using the identity $$ \frac{d}{dr}argf' = Im\left(\frac{f'}{f}\right)$$.
  Along the path: $$\frac{d}{dr}f(z(r)) = f'(z(r))\frac{dz}{dr} = f'(z)\left(e^{i\theta(r)} + ir\theta'(r)e^{i\theta(r)}\right).$$
  But since argf = $\theta_{0}$ is constant , the derivative must vanish.
  $Im\left( Fe^{i\theta(r)}(i + ir\theta'(r))\right)$ = 0, where $F = \frac{f'}{f} = Re^{i\phi},$ 
  so, the imaginary part becomes
  $RIm\left(e^{i(\theta+\phi)}(1+ir\theta'(r))\right).$\\
  We require the imaginary part to be zero. After computing and expanding we have: 
 $$\theta'(r) = - \frac{tan(\phi+\theta(r))}{r}$$\\
  Suppose argF = $\theta_{0}+\frac{a}{r}$. Let $\Psi = \phi + \theta(r)$.
  Consequently, we have:
 $$\Phi'(r)+ \frac{tan\Psi}{r} = -\frac{a}{r^2}$$For small value of $\Phi$, using Taylor series approximation tan$\Phi\approx\Phi$, this reduces to
  $\frac{d}{dr}(r\Psi) = -\frac{a}{r}$. Integrating  both side  yields the desired expression for $\theta(r)$.
 Thus,  $\theta(r)$ describes a spiral path inside the unit disk whose angle depends logarithmically and rationally on r.\\
  If argF is constant , then 
  $$\theta'(r) = \Phi'(r) = -\frac{tan\Psi}{r}$$ . After arranging and integrating, the result follows immediately.
 
  \end{proof}
  
  \end{theorem}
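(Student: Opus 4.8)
The plan is to trace the image point $f(z(r))$ along the spiral $z(r)=re^{i\theta(r)}$ and exploit that its argument is frozen at $\theta_{0}$. First I would record that, since $\arg f(z(r))\equiv\theta_{0}$, one has $0=\frac{d}{dr}\arg f(z(r))=\operatorname{Im}\frac{d}{dr}\log f(z(r))$. Writing $F:=f'/f=Re^{i\phi}$ (so $\phi=\arg f'-\arg f=\arg f'-\theta_{0}$) and $z'(r)=e^{i\theta(r)}\big(1+ir\theta'(r)\big)$, this becomes $R\operatorname{Im}\big(e^{i(\phi+\theta)}(1+ir\theta')\big)=0$. Here $R>0$: since $f$ is sense-preserving univalent, $J_{f}=|h'|^{2}-|g'|^{2}>0$ forces $f'=h'\neq0$, and $f\neq0$ along the spiral once $r$ is close enough to $1$ that $z(r)$ lies near $\partial\Omega$ (a finite distance from $0$, as $\Omega$ is bounded and convex). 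Expanding the imaginary part gives $\sin(\phi+\theta)+r\theta'\cos(\phi+\theta)=0$, hence the first-order ODE
\[
 \theta'(r)=-\frac{\tan\big(\phi(r)+\theta(r)\big)}{r}.
\]

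Next I would integrate this by setting $\Psi:=\phi+\theta$, so that $\Psi'=\phi'-\tfrac{\tan\Psi}{r}$, and splitting according to the behaviour of $\phi$. When $\phi$ is modelled by $\theta_{0}+a/r$ — the case in which $\arg f'$ tends to a constant — we have $\phi'=-a/r^{2}$; linearising $\tan\Psi\approx\Psi$, valid in the regime where the spiral stays nearly radial, turns the equation into $\Psi'+\Psi/r=-a/r^{2}$, i.e. $\frac{d}{dr}(r\Psi)=-a/r$. Integrating gives $r\Psi=-a\log r+c$, and $\theta=\Psi-\phi$ yields $\theta(r)=\frac{-a(\log r+1)+c}{r}-\theta_{0}$, the first alternative. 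When instead $\phi$ is constant, $\Psi'=-\tan\Psi/r$ is separable: $\int\cot\Psi\,d\Psi=-\int dr/r$ gives $\log|\sin\Psi|=-\log r+\mathrm{const}$, so $\sin\Psi=c/r$ and $\theta(r)=\Psi-\phi=\sin^{-1}(c/r)-\phi$, the second alternative; the constants $a,c$ are fixed by prescribing $\theta$ at one radius.

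Several points need care. First, the reduction to the ODE uses that $\theta$ is $C^{1}$ (hypothesis) and that the correct chain rule is applied: for harmonic $f$ one has $\frac{d}{dr}\log f(z(r))=\frac{f_{z}z'+f_{\bar z}\overline{z'}}{f}$, which collapses to $\frac{f'}{f}z'$ only in the holomorphic model, so I would either restrict to that model (as the displayed formulas implicitly do) or carry the extra $f_{\bar z}$ term through. Second, the $\tan\Psi\approx\Psi$ step is only a leading-order linearisation, so the first closed form solves the linearised equation rather than the exact one and should be labelled accordingly. Third, the two cases as phrased overlap literally — "$\arg f'$ constant" already forces $\phi$ constant, which is the second case — so I would restate the dichotomy cleanly as "$\phi(r)=\theta_{0}+a/r$" versus "$\phi$ constant", the two ansätze that actually produce the two formulas; the boundedness and convexity of $\Omega$ enter only to guarantee $f'\neq0$ and $f\neq0$ near the boundary. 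The main obstacle is therefore not the integration (routine once the ODE is in hand) but pinning down the precise hypotheses under which each closed form is exact.
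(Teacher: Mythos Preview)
Your approach is essentially identical to the paper's: derive the ODE $\theta'(r)=-\tan(\phi+\theta)/r$ from $\operatorname{Im}\big(Fe^{i\theta}(1+ir\theta')\big)=0$ with $F=f'/f=Re^{i\phi}$, set $\Psi=\phi+\theta$, then integrate under the ansatz $\phi=\theta_{0}+a/r$ (with the linearisation $\tan\Psi\approx\Psi$) to get the logarithmic branch and under $\phi$ constant (separable equation) to get the $\sin^{-1}$ branch. Your critical remarks---that the harmonic chain rule carries an extra $f_{\bar z}\overline{z'}$ term, that the linearisation yields only an approximate solution, and that the case labels in the statement are swapped relative to what the argument actually establishes---are all accurate and apply equally to the paper's own proof.
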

  \begin{theorem}
  Assume  $f = h + \overline{g}$ denotes the General Riemann Mapping from the open unit disk  onto a bounded convex domain $\Omega$.
  If $lim_{z\rightarrow e^{i\theta_{0}}}$arg $\omega(z)$ = $\beta$, then \\
  a. $lim_{z\rightarrow e^{i\theta_{0}}}$arg $f'(z)$ = $-\frac{\beta}{2}\pmod\pi$ \\
  b). If $lim_{z\rightarrow e^{i\theta_{0}}}|\omega(z)| = \lambda,$ where $z\in S_{\theta_{0}}(\alpha)$  $\&$   $ \lambda\in[0, 1).$\\
  Then:  $$0\leq\frac{m}{2}\leq|h'(z)|\leq \frac{M}{1-\lambda}$$
  where, m = inf$|f_{z}(z)|>0$ and M = sup$|f_{z}(z)|$\\
  c).There is no  interior zeros of $f'$ in a stolz angle at $e^{i\theta_{0}}$.\\
  
  \begin{proof}
  a. Given that $f = h + \overline{g}$.  $$f'(z) = h'(z) + \overline{h'(z)\omega(z)}$$.
  Let $h'(z) = R(z)e^{i\phi(z)})$,   $\omega(z) = r(z)e^{i\theta(z).}$
  Then: $$f'(z) = R(z)e^{i\phi(z)}\left[1 + r(z)e^{-i(\theta(z)+2\phi(z))}\right]$$.
  This implies: 
  $$lim_{z\rightarrow e^{i\theta_{0}}}argf'(z) = \alpha + arg\left[1 + e^{-i(2\alpha+\beta)}\right]$$, Assuming that $lim_{z\rightarrow e^{i\theta_{0}}}$arg $h'(z)$ = $\alpha.$\\
  Set  $2\alpha+\beta = \zeta$. Since $\frac{sin\zeta}{1+cos\zeta}$ = $tan\frac{\zeta}{2},$  then arg$\left[1+e^{-i\zeta}\right]$ = $-\frac{\zeta}{2}.$\\
      
  Consequently, $lim_{z\rightarrow e^{i\theta_{0}}}$arg $f'(z)$ = $\alpha - \left(\frac{2\alpha+\beta}{2}\right)$ =    $-\frac{\beta}{2}$\\
  Even though both $h'(z)$ and $\omega$ contribute to $f$, the limiting angle of  the full derivative $f'(z)$ near the boundary depends only on arg$\omega.$ This shows  that the harmonic shear(through $\omega)$ completely controls the direction of the tangent vector to the image curve near $e^{i\theta_{0}}.$\\
  \textbf{Example}: Consider the univalent harmonic function
  $$f(z) = h + \overline{g} = Re\left[\frac{i}{2}log \left(\frac{i+z}{i-z}\right)\right] + i Im \left[ \frac{1}{2}log \left(\frac{i+z}{i-z}\right)\right]$$
  With the dilatation  $\omega(z) = \frac{g'(z)}{h'(z)} = z^2.$\\
  
  Since  $lim_{z\rightarrow e^{i\theta_{0}}}$arg $ \omega(z)$ = -2 $lim_{z\rightarrow e^{i\theta_{0}}}$arg$z\pmod\pi$, Then from the above proof (a) we have 
  $lim_{z\rightarrow e^{i\theta_{0}}}$arg $f'(z)$ = -2arg$z \pmod\pi$\\
 $f$ maps  $\Delta$ univalently and sense- preservingly $|\omega(z)|<1).$ The mobius transform $\frac{i+z}{i-z}$ sends $\Delta$ to the upper half-plane, and the logarithm further it to a vertical strip.The combination of real and imaginary parts rotates and scales this strip to produce convex square region as shown in figure 1 below. The derivative satisfies $lim_{z\rightarrow e^{i\theta_{0}}}$arg $ \omega(z)$ = -2 $lim_{z\rightarrow e^{i\theta_{0}}}$arg$z\pmod\pi$, ensuring convex boundary with out folding or self intersection.
   
   \newpage
  \begin{figure}
\centering
  \includegraphics[scale=0.5]{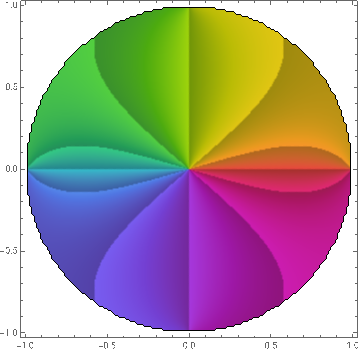}
  \caption{Image of f($\Delta$)}
  \end{figure}
  
  (b) Since, $h'(z) = \frac{f_{z}(z) - \omega(z)\overline{f_{\bar{z}}}}{1-|\omega|^2}$\\
   $$|h'(z)|\geq \frac{|f_{z}(z)| + |\omega(z){f_{\bar{z}}}|}{1 -|\omega|^2}\geq\frac{m}{2}$$
   Similarly, $|h'(z)|\leq\frac{M}{1-\lambda}.$
   The result follows immediately.
  
  For (c) we know that $f'(z) = h'(z)\left(1+\frac{\overline{\omega h'}}{h'}\right).$\\
  Consequently, $f'(z) = 0.$ This leads to -1 = $\frac{\overline{\omega h'}}{h'}.$ Since f is sense preserving $h'(z)\neq0.$\\
  1 = $|\omega(z)|.$ but f is GRM and $|\omega(z)|<1, for |z|<1.$  It contradicts. \\
  Therefore, there is no zeros of f' inside a stolz angle.\\
 Example:
 consider  $$f(z) = z + \frac{\overline{kz^2}}{2}, 0<k<1$$.\\
 We have, $\omega(z) = kz,$ $|\omega(z)|<1,$ $0<k<1$, $|z|<1$, showing that f is sense preserving. \\
  $$f'(z) = 1+k\overline{z}= 0.$$ This implies $|z| = \frac{1}{k}>1.$ so no zeros in the $\Delta.$ Hence no zeros in any  stolz angle.\\
   Since there are no zeros of $f'$ in $\Delta,$ no folding occurs in fig 2. Light grey mesh(grid inside the image) are lines of the deformed grid. They come from vertical and horizontal lines of the disk being mapped by f. The red boundary curve is the image of the unit circle. It tells us the outer shape of the image domain. Furthermore, the image is simply connected, convex domain with out cusps or self intersection.
 \begin{figure}[h]
  \centering
  \begin{minipage}{0.49\textwidth}
  \centering
  \includegraphics[width=0.5\textwidth]{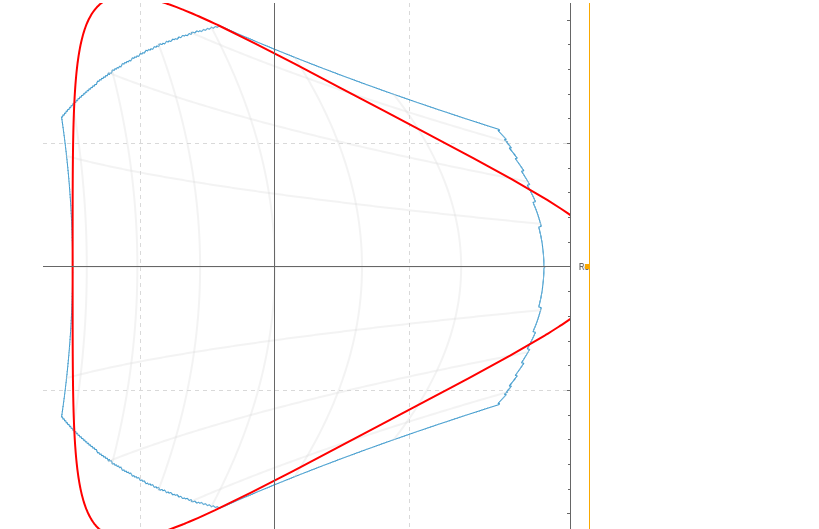}
  \caption{Image of $f(\Delta),  k = 0.5$}

  \end{minipage}
   \hfill
    \begin{minipage}{0.5\textwidth}
  \centering
  \includegraphics[width=0.99\textwidth]{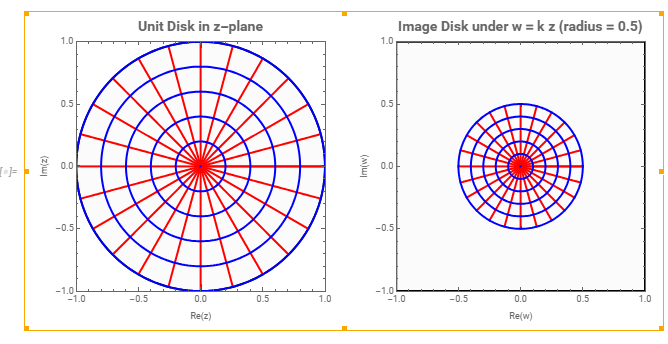}
  \caption{Image of $\Delta$ under $\omega(z) = kz, k = 0.5$}

  \end{minipage}
  
  \end{figure}
 
  \end{proof}
   
\end{theorem}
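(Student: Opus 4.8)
The plan is to reduce all three parts to the single factorisation of $f' = h' + \overline{g'}$ that comes from $g' = \omega h'$, namely
$$f'(z) = h'(z)\Bigl(1 + \frac{\overline{g'(z)}}{h'(z)}\Bigr) = h'(z)\bigl(1 + \overline{\omega(z)}\,\overline{h'(z)}/h'(z)\bigr).$$
Writing $h'(z) = R(z)e^{i\phi(z)}$ and $\omega(z) = r(z)e^{i\psi(z)}$, so that $\overline{h'}/h' = e^{-2i\phi}$, this becomes $f'(z) = R(z)e^{i\phi(z)}\bigl(1 + r(z)e^{-i(2\phi(z)+\psi(z))}\bigr)$, and everything is read off from here.

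For (a), take arguments: $\arg f'(z) = \phi(z) + \arg\bigl(1 + r(z)e^{-i(2\phi(z)+\psi(z))}\bigr)$, and let $z\to e^{i\theta_{0}}$ non-tangentially. Using the hypothesis $\arg\omega(z)\to\beta$ together with the convergence $\arg h'(z)\to\alpha$ and $|\omega(z)|\to 1$, the bracket tends to $1+e^{-i\zeta}$ with $\zeta = 2\alpha+\beta$. The elementary identity $\frac{\sin\zeta}{1+\cos\zeta}=\tan\frac{\zeta}{2}$ — equivalently $1+e^{-i\zeta} = 2\cos(\zeta/2)\,e^{-i\zeta/2}$ — gives $\arg(1+e^{-i\zeta}) = -\zeta/2 \pmod\pi$, hence $\arg f'(z)\to \alpha - \tfrac{2\alpha+\beta}{2} = -\tfrac{\beta}{2}\pmod\pi$; the $\pmod\pi$ is exactly what lets both $\alpha$ and the sign of $\cos(\zeta/2)$ drop out.

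For (b), note that $f = h+\overline g$ gives $h' = f_{z}$, so $|h'(z)| = |f_{z}(z)|$ on the Stolz angle $S_{\theta_{0}}(\alpha)$, and the chain $0\le \tfrac{m}{2}\le m\le |h'(z)|\le M\le \tfrac{M}{1-\lambda}$ follows at once from the definitions of $m,M$ and from $0<1-\lambda\le 1$. The only point needing an argument is $m>0$: since $f$ is sense-preserving, $J_{f}=|h'|^{2}-|g'|^{2}>0$, so $h'\ne 0$ throughout $\Delta$, and after a translation normalising $f(0)=0\in\Omega$, Theorem 2.3 together with $|f_{\bar z}| = |\omega|\,|f_{z}|$ yields $|f_{z}|^{2}(1+|\omega|^{2})\ge \tfrac{1}{16}\operatorname{dist}(0,\partial\Omega)^{2}$, so $|f_{z}|\ge\bigl(\tfrac{1}{32}\operatorname{dist}(0,\partial\Omega)^{2}\bigr)^{1/2}>0$ up to the boundary. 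For (c), if $f'(z_{0})=0$ for some $z_{0}\in\Delta$ then, $f$ being sense-preserving, $h'(z_{0})\ne 0$, so the factorisation forces $1+\overline{\omega(z_{0})}\,\overline{h'(z_{0})}/h'(z_{0})=0$; taking moduli gives $|\omega(z_{0})| = 1$, contradicting $|\omega|<1$ on $\Delta$ for a GRM. Thus $f'$ is zero-free on all of $\Delta$, in particular in every Stolz angle at $e^{i\theta_{0}}$.

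The main obstacle is part (a): one must justify that $\arg h'(z)$ genuinely has a non-tangential limit $\alpha$ at $e^{i\theta_{0}}$ and, more delicately, that $|\omega(z)|\to 1$ there — without the latter the bracket converges to $1+\lambda e^{-i\zeta}$ and the limiting argument of $f'$ is the $\lambda$-dependent quantity $\alpha + \arg(1+\lambda e^{-i\zeta})$ rather than the clean $-\beta/2$ (the model $f(z)=z+\overline{kz^{2}}/2$, for instance, has $|\omega|\to k<1$), so either a boundary-dilatation property of the GRM onto a bounded convex domain must be invoked or part (a) should carry a $\lambda$-dependent right-hand side. By contrast, parts (b) and (c) are routine once the factorisation and the sense-preserving inequality $|\omega|<1$ are in hand.
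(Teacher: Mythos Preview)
Your argument for all three parts follows the paper's almost line for line: the same factorisation $f'(z)=h'(z)\bigl(1+\overline{\omega h'}/h'\bigr)$, the same polar representation $h'=Re^{i\phi}$, $\omega=re^{i\psi}$, the same half-angle identity $\arg(1+e^{-i\zeta})=-\zeta/2$ in (a), and the same modulus contradiction $|\omega|=1$ in (c). For (b) you take the more direct route of noting $h'=f_{z}$ outright, whereas the paper passes through the (tautological) formula $h'=(f_{z}-\omega\,\overline{f_{\bar z}})/(1-|\omega|^{2})$ before estimating; your version is cleaner and your use of Theorem~2.3 to secure $m>0$ is an addition the paper omits.

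The gap you flag in (a) is real and is present in the paper as well: the paper, like you, silently assumes both that $\arg h'(z)$ has an angular limit $\alpha$ and that $r(z)=|\omega(z)|\to 1$, passing from $1+r(z)e^{-i(2\phi+\psi)}$ to $1+e^{-i(2\alpha+\beta)}$ without comment. Your observation that without $|\omega|\to 1$ the limit becomes the $\lambda$-dependent $\alpha+\arg(1+\lambda e^{-i\zeta})$ is correct, and your counterexample $f(z)=z+\overline{kz^{2}}/2$ is in fact the paper's own illustrative example for part (c). So your proposal matches the paper's proof while being more explicit about its hidden hypotheses.
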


 \section*{Conclusion}                           
 In this paper, we  set various conditions on the boundary behavior of  complex-valued injective harmonic mapping of $f$ to determine  the angular limits of the argument and logarithm of analytic functions. We also studied the angular limits of the arguments and logarithms of analytic functions provided the angular limit of the dilatation is constant. , We have examined that  the dilatation $\omega$ possesses finite number of zeros in any stolz angle if the first derivative of $f$ at the boundary is positive infinity. Furthermore, We have shown that $f'$ has no  interior zeros with in any  stolz angle at $e^{i\theta_{0}}$ provided that  $f$ from the open unit disk  onto a bounded convex domain \


\textbf{Conflict  of interest declaration}\\
The authors disclose no conflict of interest in connection with this paper.

 \end{document}